\newtheorem{remark}{Remark}
\newtheorem{theorem}{Theorem}
\newtheorem{props}{Proposition}
\theoremstyle{definition}
\newtheorem{defn}{Definition}
\newtheorem{algo}{Algorithm}
\newtheorem{exmp}{Example}
\begin{document}
\title{Selection of overlapping interaction through approximate decentralized fixed mode measure}
%

\author{Pradosh~Ranjan~Sahoo$^{1}$, Abhilash~Patel$^2$, Sandip Ghosh$^{3*}$, Asim~K.~Naskar$^1$\\
{\small $^1$Department of Electrical Engineering, National Institute of Technology,
Rourkela, India}\\
{\small $^2$Department of Electrical Engineering, Indian Institute of Technology, Delhi, India}\\
{\small $^3$Department of Electrical Engineering, Indian Institute of Technology (BHU), Varanasi, India}\\
{\small Corresponding* Email: sghosh@eee.iitbhu.ac.in}}

\date{}

%
%
{\let\newpage\relax\maketitle}
\begin{abstract}
This work considers the problem of selecting overlapping control structures to remove decentralized fixed modes. The selection of such controller is conventionally carried out based on minimal overlapping communications. In this paper, this selection is proposed to be through approximate decentralized fixed mode measure. Also, a framework for improving the control cost of approximate decentralized fixed modes through overlapping control is given. Application of the proposed selection is demonstrated through several numerical examples.
\end{abstract}
\maketitle


\section{Introduction}
Many control problems of modern industries and  society, e.g. electrical power systems, transportation systems and robotic systems, are dealt in the framework of control of large-scale systems
 \cite{sandell:1978, stankovic:2000, siljak:2011, siljak:2005, chang:1990}. Usually, the desirable control structure for such systems is decentralized, which consists of several
local control stations with restrictions in information flow from one control station to another.

One important issue that arises with decentralized control is the decentralized fixed modes (DFM)  The notion of DFM was first introduced by Wang and Davison \cite{Wang:1973}. DFMs are the modes of the open-loop system that are fixed with respect to any LTI decentralized controller. Different characterization methods for DFM have been discussed in literature \cite{anderson:1981,davison:1983,anderson:1982, lavaei:2007}.

Often, decentralized overlapping control that  allows information sharing among limited control stations is used to eliminate DFMs \cite{ sezer:1981,armentano:1982,unyeliouglu:1989,sojoudi:2009}. For this, finding a minimum cost feedback pattern that does not give rise to structural fixed mode is described in \cite{unyeliouglu:1989}. In \cite{sojoudi:2009}, finding optimal information structure for structurally constrained controller is considered for removing un-repeated fixed modes. Recently, characterization of DFM in the perspective of distributed control has been carried out based on overlapping estimation in \cite{sturz:2017}, where structural conditions for guaranteeing the existence of minimal communication topology are derived. Other applications, for selection of information sharing is found in multi-agent systems \cite{wei:017}, where algorithms are used to obtain stabilizing information shairing connections.

In applications, it is often encountered that modes are not exactly fixed, rather these are close to being fixed.
Such modes are called as Approximate DFMs (ADFMs) \cite{vaz:1989}. Characterization of ADFM are discussed in \cite{davison:2008, sahoo:2017}. Control of such ADFMs require large control cost and one may wish to select overlapping controller for better control of the ADFM. Note that, since ADFMs are not DFM, the existing work on overlapping controller selection for DFMs are not directly applicable.

In this paper, ADFM measure is used as the cost function for selection of overlapping interactions for DFM removal. Such a performance measure for overlapping loop selection has not been investigated so far in literature. Methodology for the same is developed and demonstrated through examples. Selection of overlapping interactions for systems with ADFMs is also discussed. For this, ADFMs are first approximated as DFMs by appropriately perturbing system matrices following the DFM characterization. Such modes are termed as Resemblant DFMs (RDFMs) and overlapping controllers are chosen for removing the RDFMs that has potential for improving ADFM measure. Application are given to demonstrate the proposed method.

\section{The system}
Consider a large-scale  system  described as:
\begin{equation}\label{e1.4}
\begin{split}
\dot{\tilde{x}}(t) &=\begin{bmatrix}\sigma&0\\0&\tilde{A}\end{bmatrix}\tilde{x}(t)+\sum\limits_{j=1}^v{\tilde{B}}_ju_j(t) \\
  y_i(t) &= {\tilde{C}}_i\tilde{x}(t)+ \sum\limits_{j=1}^v{\tilde{D}}_{ij}u_j(t),\quad i=1,\ldots,v.
	\end{split}
\end{equation}
Let $\sigma$ being a non-repeated DFM of (\ref{e1.4}). where $v$ is the number of control stations. $x(t)\in \mathbb{R}^n$ is the system state, $u_i(t)\in \mathbb{R}^{m_i}$ and $y_i(t)\in \mathbb{R}^{r_i},i=1,2,\ldots,v, $ are the input and the output, respectively, at the $i^{th}$ control station. $A$, $B_j$s and $C_i$s are real and constant matrices of appropriate dimension. Further, define $u(t)=\begin{bmatrix}u_1^T(t),\ldots,u_v^T(t)\end{bmatrix}^T$, $y(t)=\begin{bmatrix}y_1^T(t),\ldots,y_v^T(t)\end{bmatrix}^T,\,\tilde{B}=\begin{bmatrix}\tilde{B}_1,\ldots,\tilde{B}_v\end{bmatrix}$ and $\tilde{C}=\begin{bmatrix}\tilde{C}_1^T,\ldots,\tilde{C}_v^T\end{bmatrix}^T$. Let us assume that system (\ref{e1.4}), i.e. the triplet $\left(A=\begin{bmatrix}\sigma&0\\0&\tilde{A}\end{bmatrix},\tilde{B},\tilde{C}\right)$, is centrally controllable and observable. Controllability and observability together guarantees that there exists no CFMs,  but the same is not true for DFMs.

The control structure considerations on the output feedback controller $u(t)=Ky(t)$ for  system (\ref{e1.4}) is as follows. Let $\bar{v}=\{1,2,\ldots,v\}$. The controller gain matrix $K$ has block-entry $K_{ij}, i,j\in\bar{v}$, if $y_j(t)$ contributes in constructing $u_i(t)$. Note that, any sort of structural constraint on $K$ will lead to a decentralized controller. A perfect decentralized controller contains only strict interactions at the respective control stations, i.e. $K=K_D$, $K_D=blockdiag(K_{11},\ldots,K_{vv})$. The off-diagonal entries in $K$ contributes to overlapping interactions and will be contained in $K_e$. An overlapping decentralized controller will be denoted as $K=K_D\cup K_e$.
\section{DFM and its removal}
\subsection{Decentralized Fixed Mode}
 A significant problem associated with the decentralized control of system (\ref{e1.4}) is the occurrence  of Decentralized Fixed Mode (DFM) \cite{Wang:1973} defined as below.
\begin{defn} System (\ref{e1.4}) is said to have a DFM at $\sigma\in\mathbb{C}$, if $\sigma\in sp(A)$  and it is so for any LTI decentralized feedback controller. A DFM satisfies the below condition
\begin{equation}\label{e1.3}
\sigma\in sp(A+\tilde{B}K_D(I-\tilde{D}K_D)^{-1}\tilde{C}),\qquad \forall K_D
\end{equation}
where $|(I-\tilde{D}K_D)|\neq0$.
\end{defn}

\begin{theorem}[\cite{lavaei:2007}]\label{th1} If there exists a permutation of $\{1,\ldots,v\}$ denoted by distinct integers $i_1,i_2,\ldots,i_v$ and $w\in\{1,2,\ldots,v-1\}$ such that $M^{\gamma,\eta}=\tilde{C}_{\gamma}\begin{bmatrix}0&0\\0&(\tilde{A}-\sigma I_{n-1})^{-1}\end{bmatrix}\tilde{B}_{\eta}-\tilde{D}_{\gamma,\eta}$ is a zero matrix $\forall \eta\in\{i_1,i_2,\ldots,i_w\}$, $\forall \gamma\in\{i_{w+1},i_{w+2},\ldots,i_v\}$ and the first row of the matrices $\tilde{B}_1,\tilde{B}_2,\ldots,\tilde{B}_{i_w}$ and the
first column of the matrices $\tilde{C}_{i_{w+1}},\ldots,\tilde{C}_{i_v}$ are all zero, then the mode $\sigma$ is a  DFM of system (\ref{e1.4}).
\end{theorem}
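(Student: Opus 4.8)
The goal is to show that the three structural hypotheses force $\sigma\in sp\!\left(A+\tilde{B}\hat{K}\tilde{C}\right)$ for every admissible decentralized gain, where $\hat{K}=K_D(I-\tilde{D}K_D)^{-1}$ and $A_{cl}=A+\tilde{B}\hat{K}\tilde{C}$. The plan is to prove $\det(\sigma I-A_{cl})=0$ directly. As a first move I would relabel the stations by the permutation $i_1,\dots,i_v$ so that the ``input group'' $\mathcal{I}=\{i_1,\dots,i_w\}$ occupies the first $w$ indices and the ``output group'' $\mathcal{O}=\{i_{w+1},\dots,i_v\}$ the rest. In these coordinates the hypotheses read: the first row of $\tilde{B}$ restricted to $\mathcal{I}$ vanishes, the first column of $\tilde{C}$ restricted to $\mathcal{O}$ vanishes, and the block $M^{\mathcal{O}\mathcal{I}}=0$, i.e. the matrix $M=[M^{\gamma\eta}]$ is block upper-triangular for the $(\mathcal{I},\mathcal{O})$ partition.

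The first obstacle is that $\sigma\in sp(A)$, so $\sigma I-A$ is singular and the naive determinant identity is unavailable. I would circumvent this by working at $s$ near (but not equal to) $\sigma$ and using $\det(sI-A_{cl})=\det(sI-A)\,\det(I-K_D\tilde{D})^{-1}\det(I-K_DG(s))$, with $G(s)=\tilde{C}(sI-A)^{-1}\tilde{B}+\tilde{D}$. Because $\sigma$ is a simple eigenvalue of the block-diagonal $A$, the resolvent has the rank-one principal part $(sI-A)^{-1}=e_1e_1^{T}/(s-\sigma)+\text{(regular)}$, so $G(s)=\kappa\beta/(s-\sigma)+G_\infty(s)$, where $\kappa$ is the first column of $\tilde{C}$, $\beta$ the first row of $\tilde{B}$, and one checks $G_\infty(\sigma)=-M$.

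The crux is to evaluate $\lim_{s\to\sigma}(s-\sigma)\det(I-K_DG(s))$. Writing $N(s)=I-K_DG_\infty(s)$ with $N(\sigma)=I+K_DM$, and treating $-K_D\kappa\beta/(s-\sigma)$ as a rank-one update, the rank-one determinant identity $\det(X+uv^{T})=\det X+v^{T}\operatorname{adj}(X)u$ gives $(s-\sigma)\det(I-K_DG(s))\to-\beta\,\operatorname{adj}(N(\sigma))\,K_D\kappa$. Using the adjugate rather than $N(\sigma)^{-1}$ is the point that sidesteps the second obstacle, namely the possible singularity of $N(\sigma)$. Hence $\det(\sigma I-A_{cl})$ is a nonzero multiple (the surviving factors $\det(\sigma I-\tilde{A})$ and $\det(I-K_D\tilde{D})^{-1}$ are nonzero by simplicity of $\sigma$ and well-posedness) of the scalar $\beta\,\operatorname{adj}(N(\sigma))\,K_D\kappa$, and it remains to show this scalar vanishes for every block-diagonal $K_D$.

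This last step is exactly where the three hypotheses conspire. Block-diagonality of $K_D$ together with $M^{\mathcal{O}\mathcal{I}}=0$ makes $N(\sigma)=I+K_DM$ block upper-triangular, so $\operatorname{adj}(N(\sigma))$ is block upper-triangular with zero lower-left block (a polynomial identity, valid even when $N(\sigma)$ is singular). Since $\beta$ has zero $\mathcal{I}$-component and $K_D\kappa$ has zero $\mathcal{O}$-component, the product $\beta\,\operatorname{adj}(N(\sigma))\,K_D\kappa$ pairs a left vector supported on $\mathcal{O}$ through the vanishing lower-left block with a right vector supported on $\mathcal{I}$, hence is $0$; therefore $\sigma$ is fixed under every decentralized feedback. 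I expect the only delicate bookkeeping to be keeping the input-indexed and output-indexed partitions straight, since $\mathcal{I}$ and $\mathcal{O}$ live on the same station set. A slicker but less self-contained alternative is to note that the same hypotheses reduce, via a Schur complement on the invertible block $\tilde{A}-\sigma I$, to $\operatorname{rank}\begin{bmatrix}A-\sigma I & \tilde{B}_{\mathcal{I}}\\ \tilde{C}_{\mathcal{O}} & \tilde{D}_{\mathcal{O}\mathcal{I}}\end{bmatrix}=n-1<n$, whence $\sigma$ is a DFM by the Anderson--Clements rank characterization.
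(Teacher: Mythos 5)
The paper does not actually prove Theorem~\ref{th1}: it is imported verbatim from \cite{lavaei:2007}, and the only place the underlying mechanism surfaces is later, in the proof of Proposition~1, where the bordered matrix (\ref{e1.24}) and its Schur complement with respect to the invertible block $\tilde{A}-\sigma I_{n-1}$ are used to connect the vanishing of $\tilde{B}_{\eta}^1$, $\tilde{C}_{\gamma}^1$ and $M^{\gamma,\eta}$ to rank deficiency. Your second, ``slicker'' route is essentially that same argument run in the forward direction (zero first row/column plus $M^{\mathcal{O}\mathcal{I}}=0$ forces $\operatorname{rank}<n$, hence a DFM by the rank characterization), so it is the closest match to what the paper itself relies on. Your primary route is genuinely different and, as far as I can check, correct: the factorization $\det(sI-A_{cl})=\det(sI-A)\det(I-K_D\tilde{D})^{-1}\det(I-K_DG(s))$, the rank-one principal part of the resolvent at the simple eigenvalue $\sigma$ with $G_\infty(\sigma)=-M$, the matrix determinant lemma in adjugate form to handle possibly singular $N(\sigma)=I+K_DM$, and the final observation that block upper-triangularity of $\operatorname{adj}(N(\sigma))$ annihilates the pairing of $\beta$ (supported on $\mathcal{O}$) with $K_D\kappa$ (supported on $\mathcal{I}$) all hold, the last step being a legitimate polynomial identity by density of invertible block-triangular matrices. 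What this buys you is a self-contained proof that does not invoke the transmission-zero/rank characterization as a black box and that makes explicit \emph{why} every admissible block-diagonal $K_D$ leaves $\sigma$ fixed; what it costs is the bookkeeping you already flagged (the input-indexed and output-indexed station partitions must be permuted consistently) and the limit argument needed because $\sigma I-A$ is singular. Either route is acceptable; if you want the shortest defensible proof, the rank-test version aligned with (\ref{e1.24}) is the one the paper implicitly endorses.
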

In view of the above, let us define
\begin{equation}\label{e1.5}
M=\begin{bmatrix}\tilde{C}_1\\\vdots\\ \tilde{C}_v\end{bmatrix}\begin{bmatrix}0&0\\0&(\tilde{A}-\sigma I)^{-1}\end{bmatrix}\begin{bmatrix}\tilde{B}_1^T\\ \vdots\\ \tilde{B}_v^T\end{bmatrix}^T-
\begin{bmatrix}\tilde{D}_{11}&\ldots&\tilde{D}_{vv}\\ \vdots& \ddots&\vdots \\ \tilde{D}_{v1}&\ldots&\tilde{D}_{vv}\end{bmatrix}
\end{equation}

Note that, $M^{\gamma,\eta}\in\mathbb{C}^{r_{\gamma}\times m_{\eta}}$ can be obtained from $M$ by concatenating $w$ rows and $(v-w)$ columns corresponding to the sequence $i_1,i_2,\ldots,i_v$.

\subsection{Overlapping controller for DFM removal}

Let the set $K_e=\{K_{p_1q_1},K_{p_2q_2},\ldots,K_{p_{\alpha} q_{\alpha}}\}$, where $p_i,q_i\in\{1,2,\ldots,v\}$, $i=1,\ldots,\alpha$, are the  desired overlapping control interactions for removing the DFM, i.e. $\sigma$ is not a DFM for the feedback gain $K=K_D\cup K_e $.
%

Further, let the system contains multiple non-repeating DFMs $\sigma_1,\sigma_2,\ldots,\sigma_{q}$ that are to be removed. The overlapping controller for removing each of the fixed modes $\sigma_i, i=1,\ldots,q$, can be obtained by the procedure discussed in \cite{sojoudi:2009} and denoted as $K_e^{1},K_e^{2},\ldots,K_e^{q}$. Then the desired overlapping controller may be chosen as
\begin{equation}\label{c1.10}
K_e\supseteq {K_e^{1}\cup K_e^{2},\ldots,\cup,K_e^{q}}.
\end{equation}
Note that, the above selection only ensures removal of the DFM. Out of several such possible selections, one may require to choose the most economical one. One approach for this would be to choose the minimal overlapping connections and then finding best one out of them. Such a method has been adopted in \cite{sojoudi:2009}. However, the choice of minimal overlapping connection is not central of this work, rather choosing the best one from the set that may include even the non-minimal ones is the objective here. It is worth noting in this regard that minimal connection may not ensure minimal cost performance.


\section{ADFM and improving its measure}
For decentralized control of system (\ref{e1.4}), it may be the case that the system does not have an explicit DFM, yet some of the modes behave similar to the DFMs, e.g. consumption of large control effort for significant relocation of a particular mode. Such modes are referred to as ADFMs \cite{davison:2008} as defined below.
\begin{defn} System (\ref{e1.4}) is said to have an ADFM at $\sigma\in \mathbb{C}$, if $\sigma\in sp(A)$  and  $\forall K_D$,
 \begin{equation}\label{e1.2}
 |\hat{\sigma}-\sigma|\leq\epsilon\,\,  \mbox{\,for\,} \hat{\sigma}\in sp(A+\tilde{B}K_D(I-\tilde{D}K_D)^{-1}\tilde{C}) 
 \end{equation}
 where $\epsilon$ is a small positive number.
\end{defn}
\subsection{ADFM measure}
One way to characterize ADFMs is through the condition number measure \cite{davison:2008}, defined as:
\begin{align}\label{e1.11a}
& \mathcal{D}(\sigma):=min\{\{cond(W_i(\sigma)),i=1,...,v \},\nonumber\\
& \{cond(W_{i,j}(\sigma)),i=1,...,v-1,j=i+1,...,v\},\nonumber \\
&\{cond(W_{i,j,k}(\sigma )),i=1,...,v-2,j=i+1,...,v-1,\nonumber \\  &k=j+1,...,v\}
...\{cond(W_{1,2,....,v})\}\},
 \end{align}
where $cond(\cdot)$ represents the condition number of $(\cdot)$ and $W_{1,\ldots,v}(\cdot)$ are matrices corresponding to the subsystems that are used to characterize DFM in terms of transmission
zeros \cite{davison:2008} as represented below:
\begin{align}\label{e1.8i}
&W_i(\sigma)=\begin{bmatrix}A-\sigma I& \tilde{B}_i\\\tilde{C}_i&0\end{bmatrix}, W_{i,j}(\sigma)=\begin{bmatrix}A-\sigma I& \tilde{B}_i&\tilde{B}_j\\\tilde{C}_i&0& \tilde{D}_{ij}\\\tilde{C}_j&\tilde{D}_{ji}&0\end{bmatrix},\nonumber \\
&W_{i,j,k}(\sigma)=\begin{bmatrix}A-\sigma I& \tilde{B}_i&\tilde{B}_j&\tilde{B}_k\\\tilde{C}_i&0&\tilde{D}_{ij}&\tilde{D}_{ik}\\\tilde{C}_j&\tilde{D}_{ji}&0&\tilde{D}_{jk}\\\tilde{C}_k&\tilde{D}_{ki}&\tilde{D}_{kj}&0\end{bmatrix},\cdots\nonumber\\
&W_{1,2,...,v}(\sigma)=\begin{bmatrix}A-\sigma I& \tilde{B}_1&\tilde{B}_2&\ldots&\tilde{B}_v\\\tilde{C}_1&0&\tilde{D}_{12}&\ldots&\tilde{D}_{1v}\\C_2&\tilde{D}_{21}&0&\ldots&\tilde{D}_{2v}\\ \vdots&\vdots&\vdots&\ddots&\vdots\\ \tilde{C}_v&\tilde{D}_{v1}&\tilde{D}_{v2}&\ldots&0\end{bmatrix}
\end{align}
Larger the $\mathcal{D}(\sigma)$, $\sigma$ is that closer to being a DFM. For an explicit DFM, $\mathcal{D}(\sigma)$ is infinite.

For the system (\ref{e1.4}), condition number measure corresponding to a DFM $\sigma$ is $\infty$. It can be seen from (\ref{e1.8i}) as follows. Note that, there exists a permutation for which the condition of Theorem 1 satisfies. Also, the particular permuted partition is a case of  (\ref{e1.8i}) for which the corresponding $W_{\dots}(\sigma)$ will have its first row and first column as zero (rank deficient), as per Theorem 1. Then following (\ref{e1.11a}), $\mathcal{D}(\sigma)=\infty$.


In case of $\sigma$ is an ADFM, the first row of the matrices $\tilde{B}_1,\tilde{B}_2,\ldots,\tilde{B}_{i_w}$ and the
first column of the matrices $\tilde{C}_{i_{w+1}},\ldots,\tilde{C}_{i_v}$ are not exactly zero but nearly close to zero. The corresponding singular value of $W_{\dots}(\sigma)$ is small and $\mathcal{D}(\sigma)$ is very large value leading to identification of an ADFM. The following example demonstrates mode characterization through ADFM measure.

\begin{exmp}\label{ex2}
Consider system (\ref{e1.4}) similar to the one in \cite{sojoudi:2009}:
\begin{eqnarray*}\label{e1.10}
&&A=\begin{bmatrix}1&0&0&0\\0&2&0&0\\0&0&3&0\\0&0&0&4\end{bmatrix},\quad B=\begin{bmatrix}3&0&0.005&0\\4&2&7&0.002\\0&0&9&8\\1&6&-5&7\end{bmatrix}, \\
&&C=\begin{bmatrix}0&2&4&3\\0.0066&-6&0&8\\0.0010&4&0.0005&-9\\5&1&0.0001&7\end{bmatrix},\\
&&D=\begin{bmatrix}-5&10&227&23\\32&60&-3&56/3\\-25&-62&43&-21\\-4.5&40&16&7\end{bmatrix}.  
\end{eqnarray*}

The ADFM measures of the system eigenvalues are computed as: $\mathcal{D}(1)=1.63\times10^5$, $\mathcal{D}(2)=13.36$, $\mathcal{D}(3)=0.25\times 10^5$, $\mathcal{D}(4)=10.07$. It is clear that the modes $\sigma=1,3$ are ADFMs.
\end{exmp}

To this end, one can use overlapping  controller discussed in the previous section for improving the ADFM measure. For this, the first step is to choose effective overlapping connections. A selection approach would be to approximate such ADFMs to DFMs and then the overlapping connections are selected to remove the approximated DFMs.

\subsection{Resemblant DFMs and overlapping loop selection}
 It may be noted that, for systems with ADFMs, a small perturbation in the system matrices, e.g. setting the small elements of $A,\tilde{B}$ and $\tilde{C}$ matrices to zero makes the ADFM to be a DFM. However it is not general in practice and requires a systematic procedure. First, let us define the following.
 \begin{defn}[Resemblant DFM (RDFM)] If $\tilde{B}$ and $\tilde{C}$ matrices in (\ref{e1.4}) are perturbed in such a way that an ADFM $\sigma$ becomes a DFM, then $\sigma$ is called as RDFM.
 \end{defn}

In light of the above, the following is proposed.
\textbf{Notation:} for any $i,j\in \bar{v}$ assume that $m_i=r_i=1$.
Denote the entries of $\tilde{B}_i$ is represented as $\tilde{B}_i^{\mu1}$ and entries of $\tilde{C}_i$ as $\tilde{C}_i^{\mu1}$ where($\mu_1\in{1,\ldots,n}$) and entries of $\tilde{D}$ is denoted as $\tilde{D}_{ij}$. There exist a permutation of $\{1,\ldots,v\}$, say $i_1,\ldots,i_v$ and a partition index $w\in\{1,\ldots,v-1\}$ with $\eta\in\{i_1,i_2,\ldots,i_w\}$, $\gamma\in\{{i_{w+1}},i_{w+2},\ldots,i_v\}$,
\begin{props} There exists a scalar $\varepsilon>0$ for which $|\tilde{B}_{\eta}^1|,|\tilde{C}_{\gamma}^1|$ and $|M_{i,j}|\leq \varepsilon$ can be approximated to zero so that an ADFM $\sigma$ becomes a RDFM.
\end{props}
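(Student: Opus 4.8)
The plan is to exhibit a threshold $\varepsilon>0$ below which zeroing out every sub-threshold quantity turns the perturbed triple into one that \emph{exactly} satisfies the hypotheses of Theorem~\ref{th1}; by that theorem $\sigma$ is then a genuine DFM of the perturbed system, hence an RDFM by the definition given above. Concretely, I would (i) extract from the ADFM structure the permutation $i_1,\ldots,i_v$ and partition index $w$ that witness the large value of $\mathcal{D}(\sigma)$, (ii) set $\varepsilon$ equal to the largest of the offending small quantities, and (iii) perturb only $\tilde{B}$ and $\tilde{C}$ so that the leading rows/columns vanish and every block $M^{\gamma,\eta}$ becomes zero.

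First I would fix the permutation. Since $\sigma$ is an ADFM, the measure $\mathcal{D}(\sigma)$ in (\ref{e1.11a}) is very large, and, as noted in the discussion following (\ref{e1.8i}), this reflects a near rank-deficiency of some $W_{i_1,\ldots,i_v}(\sigma)$ for a particular permutation and partition $w$. For that witness the leading entries $\tilde{B}_\eta^1$ ($\eta\in\{i_1,\ldots,i_w\}$), the leading entries $\tilde{C}_\gamma^1$ ($\gamma\in\{i_{w+1},\ldots,i_v\}$), and the blocks $M^{\gamma,\eta}$ of $M$ in (\ref{e1.5}) are all small but nonzero. I would then define $\varepsilon:=\max\{|\tilde{B}_\eta^1|,|\tilde{C}_\gamma^1|,|M^{\gamma,\eta}|\}$ over these index ranges, which is positive and, for an ADFM, small.

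Next I would carry out the perturbation. Setting the leading entries $\tilde{B}_\eta^1$ and $\tilde{C}_\gamma^1$ to zero is a change of size at most $\varepsilon$. A short computation shows these leading entries never contribute to $M^{\gamma,\eta}$: the middle matrix $\begin{bmatrix}0&0\\0&(\tilde{A}-\sigma I)^{-1}\end{bmatrix}$ has zero first row and first column, so in $M^{\gamma,\eta}=\tilde{C}_\gamma\begin{bmatrix}0&0\\0&(\tilde{A}-\sigma I)^{-1}\end{bmatrix}\tilde{B}_\eta-\tilde{D}_{\gamma\eta}$ the first column of $\tilde{C}_\gamma$ and the first row of $\tilde{B}_\eta$ are annihilated. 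Hence $M^{\gamma,\eta}$ is unchanged by the leading-entry perturbation, and it remains to drive each $M^{\gamma,\eta}$ (already bounded by $\varepsilon$) to zero by adjusting a non-leading entry of $\tilde{B}_\eta$ or $\tilde{C}_\gamma$. With $m_i=r_i=1$, each $M^{\gamma,\eta}$ is a scalar that is affine in any single such entry, with a coefficient equal to the corresponding entry of the row vector $\tilde{C}_\gamma\,\mathrm{diag}(0,(\tilde{A}-\sigma I)^{-1})$ (respectively of the column $\mathrm{diag}(0,(\tilde{A}-\sigma I)^{-1})\tilde{B}_\eta$); central controllability and observability keep at least one such coefficient away from zero, so the required correction has size $O(\varepsilon)$.

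Finally I would verify the hypotheses of Theorem~\ref{th1} for the perturbed triple $(\hat{A},\hat{\tilde{B}},\hat{\tilde{C}})$ with the same permutation and $w$: the first rows of $\hat{\tilde{B}}_\eta$ and the first columns of $\hat{\tilde{C}}_\gamma$ are now exactly zero and every $M^{\gamma,\eta}=0$, so Theorem~\ref{th1} gives that $\sigma$ is a DFM of the perturbed system, i.e.\ an RDFM, with the whole perturbation of size $O(\varepsilon)$. I expect the main obstacle to be the $M^{\gamma,\eta}=0$ step. Because the RDFM definition permits perturbing only $\tilde{B}$ and $\tilde{C}$ (not $\tilde{D}$), and because a single $\tilde{B}_\eta$ enters $M^{\gamma,\eta}$ for \emph{every} $\gamma$ in the second group, one must argue that all the blocks can be zeroed simultaneously without over-constraining the available entries and without a degeneracy that would force a correction larger than $O(\varepsilon)$. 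Establishing the nonvanishing affine coefficient from central controllability/observability, and checking that the different $(\gamma,\eta)$ pairs are jointly satisfiable, is the delicate part of the argument.
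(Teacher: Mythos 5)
Your proposal is correct and follows essentially the same route as the paper: both arguments reduce the claim to the rank/permutation conditions of Theorem~\ref{th1} (zero first row of $\tilde{B}_\eta$, zero first column of $\tilde{C}_\gamma$, and $M^{\gamma,\eta}=0$), observe that for an ADFM these quantities are merely $\varepsilon$-small, and declare the perturbed system a DFM once they are set exactly to zero. If anything you are more careful than the paper on the one delicate step you flag --- realizing $M^{\gamma,\eta}=0$ through perturbations of $\tilde{B}$ and $\tilde{C}$ alone, jointly over all $(\gamma,\eta)$ pairs --- which the paper sidesteps by simply replacing the small entries of $M$ itself with zero.
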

\begin{proof}
Define $\phi_i=\begin{bmatrix}\tilde{B}_i^2&\tilde{B}_i^3\ldots\tilde{B}_i^n\end{bmatrix}^T$ and $\Psi_i=\begin{bmatrix}\tilde{C}_i^2&\tilde{C}_i^3\ldots\tilde{C}_i^n\end{bmatrix}$ where $i\in\bar{v}$.
If $\sigma$ is a DFM of (\ref{e1.4})then rank of given matrix below is less than $n$
\begin{equation}\label{e1.24}
\begin{bmatrix}\begin{bmatrix}0&0\\0&\tilde{A}-\sigma I_{n-1}\end{bmatrix}&\tilde{B}_{i_1}&\ldots&\tilde{B}_{i_w}\\
\tilde{C}_{i_{w+1}}&\tilde{D}_{{i_{w+1}i_1}}&\ldots&\tilde{D}_{{i_{w+1}i_w}}\\
\vdots&\vdots&\ddots&\vdots\\
\tilde{C}_{i_{v}}&\tilde{D}_{{i_{v}i_1}}&\ldots&\tilde{D}_{{i_{v}i_w}}
\end{bmatrix}
\end{equation}
The multiplicity of $\sigma$ is one so the rank of the matrix $\begin{bmatrix}0&0\\0&A-\sigma I_{n-1}\end{bmatrix}$ is $n-1$.
The rank of the matrix (\ref{e1.24})is less than $n$

\begin{itemize}
  \item If all elements of the first row or the first column of given matrix (\ref{e1.24})is zero, i.e.$|\tilde{B}_{\eta}^1|,|\tilde{C}_{\gamma}^1|=0$ for any $\eta$ and $\gamma$.
  \item and if $rank\begin{bmatrix}\tilde{A}-\sigma I_{n-1}&\phi_{\eta}\\ \Psi_{\gamma}&D_{\gamma\eta}\end{bmatrix}<n$ i.e To satisfy the rank condition it can be concluded that determinant of above matrix is zero. So the norm of the matrix $D_{\gamma\eta}-\Psi_{\gamma}(\tilde{A}\sigma I_{n-1})^{-1}\phi_{\eta}$ is zero.
\end{itemize}
From the definition of $M$ matrix it is clear that $M^{\gamma,\eta}=\Psi_{\gamma}(\tilde{A}-\sigma I_{n-1})^{-1}\phi_{\eta}-D_{\gamma\eta}$. So in order to convert ADFM to RDFM the entries of $|\tilde{B}_{\eta}^1|,|\tilde{C}_{\gamma}^1|$ and $|M_{i,j}|$ which are close to zero can be converted to zero.
\end{proof}
The following algorithm can be used to obtain a RDFM. Note that, for oscillatory modes, the elements of the constituent matrices would be complex. Hence, $|B|$, $|C|$ and $|M|$ are used in the above as well as in the below algorithm.
\begin{algo}[Obtaining RDFM]
   \item[\textbf{Step 1:}] Assume the ADFM $\sigma$ has been identified through ADFM measure. Transform the system into (\ref{e1.4}).
	\item[\textbf{Step 2:}] Select a permutation of $\{1,\ldots,v\}$, say $i_1,\ldots,i_v$, and a partition index $w\in\{1,2,\ldots,v-1\}$ with $\eta\in\{i_1,i_2,\ldots,i_w\}$, $\gamma\in\{{i_{w+1}},i_{w+2},\ldots,i_v\}$, so that (i) norm of the corresponding $M^{\gamma,\eta}$ and (ii) the entries of $|\tilde{B}_{\eta}^1|$ and $|\tilde{C}_{\gamma}^1|$, are close to zero.
  \item[\textbf{Step 3:}]  Replace the entries of $M^{\gamma,\eta}$, $|\tilde{B}_{\eta}^1|$ and $|\tilde{C}_{\gamma}^1|$ to zero. Then $\sigma$ is a RDFM of the system. 
 \end{algo}
 For Example \ref{ex2}, $M$ matrix for the ADFM $\sigma=1$ is:
 \begin{equation}M=\begin{bmatrix}
 14  &       0 &        0 &   0.004\\
  -53.333 & -56 & -52.333  & -0.012\\
   38 &  52 &   0.002 &   0.010\\
   10.833 & -24 & -20.666 &   9.336\end{bmatrix}
 \end{equation}
It can be worked out that $\sigma=1,3$ are not DFM because there is no $\gamma$ and $\eta$ so that the first row of the matrices $\tilde{B}_{\eta}$ (following Theorem 1) and the first column of the matrices $\tilde{C}_{\gamma}$ are all zero and corresponding $M^{\gamma,\eta}$ are also zero.

For example \ref{ex2}, setting $\varepsilon=0.015$ solves the purpose. Replacing the small elements $(0.004, -0.012, 0.002$ and $0.01)$ of $M$ with zero, and the elements of first row of $\tilde{B}$ and first column of $\tilde{C}$ matrices with values $0.005, 0.007$ and $0.001$ are made zero for $\sigma=1$ to be a RDFM. For $\sigma=1$ to be RDFM, $\eta=\{2,3,4\}$ and $\gamma=\{1\}$ is one set and $\eta=\{4\}$ and $\gamma=\{1,2,3\}$ is another set for which $M^{\gamma,\eta}=[M_{12}\,M_{13}\,M_{14}]=\begin{bmatrix}0&0&0.004\end{bmatrix}$ and $[M_{14}\,M_{24}\,M_{34}]^T=\begin{bmatrix}0.004&-0.012&0.01\end{bmatrix}^T$ are made zero, respectively. Similarly $\sigma=3$ can be made to be a RDFM.

\begin{remark}
A convenient way to implement Algorithm 1 is to permute the $\tilde{B}_i$ and $\tilde{C}_i$ matrices so that either the first row elements of $|\tilde{B}_i|$s are in ascending order or the first column elements of $|\tilde{C}_i|$s are in descending order so that selection of $\epsilon$ and thereby $w$ become easier.
\end{remark}

Next, overlapping control structure is to be chosen for removing the RDFMs. Following Theorem 3 in \cite{sojoudi:2009}, the following minimal overlapping sets for $\sigma=1$ can be chosen.
\begin{equation}\label{e1.11}
K_e^{1,1}=\{K_{14}\},K_e^{1,2}=\{K_{12},K_{34}\},K_e^{1,3}=\{K_{13},K_{24}\}
\end{equation}
Similarly, the overlapping sets for $\sigma=3$ can be obtained as:
\begin{equation}\label{e1.12}
K_e^{3,1}=\{K_{31}\},\quad K_e^{3,2}=\{K_{41}\}
\end{equation}
For removing both the RDFMs, the overlapping controller is to be chosen from the set
\begin{equation}\label{e1.11}
K_e=\left\{(K_e^{1,i}\cup K_e^{3,i}), i\in\{1,2,3\}, j=\{1,2\}\right\}
\end{equation}
Given the above choices of overlapping interactions, the economical one is chosen based on the ADFM measures corresponding to different overlapping interactions.
\subsection{ADFM measure for selection of overlapping interactions}
ADFM measure is computed for a particular feedback structure using (\ref{e1.11a}). This procedure is repeated for all the possible overlapping interactions and the combination with smallest ADFM measure is taken as the least costly interaction set. Note that, selection of overlapping interactions for DFM and RDFM are the same.

However, for computing (\ref{e1.11a}), one requires diagonal decentralized control structure, which is not the case for overlapping control. For this purpose, the overlapping controller  $K=K_d\cup K_e$ is transformed to a perfect decentralized structure \cite{davison:2008} as discussed below.

Consider system (\ref{e1.4}) with $D=0$ and the below overlapping controller.
 \begin{equation*}K=\begin{bmatrix}K_{11}&0&0&K_{14}\\0&K_{22}&0&0\\0&0&K_{33}&0\\K_{41}&0&0&K_{44}\end{bmatrix}.\end{equation*}
Then $A+BKC=A+\sum \limits_{i=1}^v\sum \limits_{j=1}^v  B_iK_{ij}C_j=A+\bar{B}\bar{K}\bar{C}$, where $\bar{K}$ is diagonal containing all $K_{ij}$, and
\begin{eqnarray*}
\bar{C}&=&\begin{bmatrix}C_1&C_4&C_2&C_3&C_1&C_4\end{bmatrix}^T, \\
\bar{B}&=&\begin{bmatrix}B_1&B_1&B_2&B_3&B_4&B_4\end{bmatrix}
\end{eqnarray*}
Then, the ADFM measure of the triplet $(\bar{C},A,\bar{B})$ can be computed using (\ref{e1.11a}) that corresponds to the original triplet $({C},A,{B})$ but with the overlapping controller.

The computed ADFM measures for the overlapping interactions for Example \ref{ex2} are shown in Table \ref{table1}. It can be seen that the measures are considerably improved for the overlapping controller as compared to the perfect decentralized controller. Also, a suitable controller choice is $ K_e=\{k_{14},k_{41}\}$ as it improves ADFM measure for both the modes considerably as compared to other interactions.
%


\begin{table}[t]
\centering
\caption{ADFM measures for different controller interaction}\label{table1}
\begin{tabular}{|c|c|c|}
  \hline
  \multirow{2}{*}{$K_e$} & \multicolumn{2}{l|}{ADFM Measure} \\ \cline{2-3}
                 &$\sigma_1=1$ & $\sigma_3=3$ \\ \hline
  $K_{14},K_{31}$ & $15.86$& $22.88$ \\ \hline
  $K_{14},K_{41}$ & \textbf{$15.86$} &\textbf{$18.26$} \\ \hline
  $K_{13},K_{24},K_{41}$ & $124.49$& $18.26$ \\ \hline
  $K_{12},K_{34},K_{31}$ & $20.84$ & $22.88$  \\
  \hline
\end{tabular}
\end{table}
\section{Conclusion}
Use of ADFM measure has been advocated for selecting the overlapping interactions for eliminating DFMs. Strating from characterization of DFMs to calculating ADFM measure has been listed and a procedure for overlapping loop selection is presented. The ADFM measure has also been used  in developing a framework for selection of overlapping interactions for systems with ADFMs. For the purpose, the notion of Resemblant DFM (RDFM) has been introduced. The efficacy of the proposed method has also been demonstrated through its applications in wide-area loop selection in power systems.

\bibliographystyle{IEEEtran}
\bibliography{poweref}

\end{document}